\newcommand{\R}{\mathbb{R}}
\renewcommand{\vec}[1]{\mathbf{#1}}
\newcommand{\norm}[1]{\|#1\|}
\newcommand{\secref}[1]{\textsc{Section \ref{#1}}}
\newcommand{\defref}[1]{\textbf{Definition \ref{#1}}}
\newcommand{\propref}[1]{\textbf{Proposition \ref{#1}}}
\newcommand{\corref}[1]{\textbf{Corollary \ref{#1}}}
\newtheorem{proposition}{Proposition}
\newtheorem{corollary}{Corollary}
\newtheorem{definition}{Definition}
\DeclareMathOperator{\Tr}{Tr}
\DeclareMathOperator{\im}{i}
\begin{document}
\title[quantum states as points in a probability simplex]{Representation of quantum states as points in a probability simplex associated to a SIC-POVM}
\author{Jos\'e Ignacio Rosado}

\email{joseirs@hotmail.com}

\keywords{Bloch vectors, probability simplex, SIC-POVM's}

\date{\today}

\begin{abstract}
    The quantum state of a $d$-dimensional system can be represented by a probability distribution over the $d^2$ outcomes of a Symmetric Informationally Complete Positive Operator Valued Measure (SIC-POVM), and then this probability distribution can be represented by a vector of $\R^{d^2-1}$ in a $(d^2-1)$-dimensional simplex, we will call this set of vectors $\mathcal{Q}$. Other way of represent a $d$-dimensional system is by the corresponding Bloch vector also in $\R^{d^2-1}$, we will call this set of vectors $\mathcal{B}$. In this paper it is proved that with the adequate scaling $\mathcal{B}=\mathcal{Q}$. Also we indicate some features of the shape of $\mathcal{Q}$.
\end{abstract}

\maketitle

\section{Introduction}
In quantum mechanics a quantum state is described by a density operator, $\rho$, but there are alternative descriptions. One alternative description is to parameterize the space of density matrices with Bloch vectors and study the structure formed by these vectors, $\mathcal{B}$, this is done thoroughly in \cite{BVN} and \cite{BVSKK}. Other possible description is provided by the probabilities $p_i=\Tr(E_i\rho)$, where $E_i$ are the elements of a SIC-POVM \cite{SICQM}, an informationally complete and symmetric POVM, with the minimal number of elements. This is the description of quantum states chosen in Quantum-Bayesianism \cite{QBRQS}.

According to the quantum-Bayesian approach to quantum foundations, see for example \cite{QMQI}, probabilities $p_i$ represent an agents Bayesian degrees of belief. When we represent the probabilities of a SIC-POVM as points in the corresponding probability simplex, $\Delta$, we will see that these probabilities are not arbitrary, not any point of $\Delta$ can represent a quantum state, only a proper subset $\mathcal{Q}\subset\Delta$. The problem is then to understand the structure of $\mathcal{Q}$, if it is possible, in physical terms. Why are the beliefs of our agent constrained in this way? I think a hint to the answer to this question is to realize that the structure of $\mathcal{Q}$ is agent independent, of course for a particular experiment two agents can differ in the assignments of probabilities for the outcomes of a SIC-POVM, but the two distributions will be represented by points in $\mathcal{Q}$. Then the structure of $\mathcal{Q}$ is saying us something about the external world, at least about the intersubjective world.

The main result of this paper is demonstrate that $\mathcal{B}=\mathcal{Q}$, with suitable scalings. We therefore can translate the results obtained in the study of $\mathcal{B}$ to probability concepts, and I think this is a useful translation because in terms of probabilities we can made use of the tools of Quantum Information Theory and our physical intuition in trying to understand the \emph{why} of the $\mathcal{Q}$ structure.

The structure of this paper is as follows. In \secref{s:simplexes} we review some basic facts about simplexes as geometrical objects that can represent probability distributions. In \secref{s:BlochQS} we review the Bloch representation of quantum states and the conditions these vectors satisfy. In \secref{s:SICBloch} we give the Bloch representation of a SIC-POVM, and see what conditions satisfy the corresponding Bloch vectors. In \secref{s:PQS} we construct $\mathcal{Q}$ and demonstrate its equality with $\mathcal{B}$, we also give some features of the shape of this set.

\section{Simplexes and probability distributions}\label{s:simplexes}
This section contains known facts about simplexes and probability distributions, but serves to collect useful results and to fix some conventions in this paper.
\begin{definition}\label{d:simplex}
    An $n$-dimensional and regular simplex in $\R^n$ is any set, $\Delta_{n+1}$, that can be defined as follows
    \begin{align*}
        \Delta_{n+1}&=\left\{\vec s=\sum_{i=1}^{n+1}p_i\vec t_i\;:\; p_i\in [0,1] ,\;\sum_{i=1}^{n+1}p_i=1,\;\; \vec t_i\in\R^n\;\;\text{and}\right.\\
        \\
        &\quad\left. \vec t_i\cdot\vec t_j=(a-b)\delta_{ij}+b\,,\;\;i,j\in\{1,2,\ldots,n+1\}\;\;\text{and}\;\; b\neq a\neq 0\right\}.
    \end{align*}
    That is, $\Delta_{n+1}$ is the convex hull of the set of vectors $V=\{\vec t_i\}_{i=1}^{n+1}$, the position vectors of the vertices of $\Delta_{n+1}$.
\end{definition}

Some observations about this definition:
\begin{description}
  \item[First] \defref{d:simplex} provides a bijective map from probability distributions over $n+1$ outcomes, $\{p_i\}_{i=1}^{n+1}$, to $\Delta_{n+1}$.
  \item[Second] $\vec t_i^2=a$, all vertices are at equal distance from the origin, a necessary condition to obtain a regular simplex centered at the origin, and this distance is not zero so that the simplex is not a point.
  \item[Third] $\vec t_i\cdot\vec t_j=b$ when $i\neq j$, this implies that the vectors of $V$ spread uniformly in space. The condition $b\neq a$ is necessary so that the vectors in $V$ are different. Also we can deduce that $b\neq 0$ because we cannot have $n+1$ orthogonal and no null vectors in $\R^n$.
\end{description}

Now we will deduce some important properties.

\begin{proposition}\label{p:nbasis}
Every subset of $V$ that contains $n$ vectors is a basis of $\R^n$.
\end{proposition}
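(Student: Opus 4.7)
The plan is to show that any $n$ chosen vectors from $V$ are linearly independent; since they live in $\mathbb{R}^n$, independence will automatically give a basis. Without loss of generality consider $\{\vec t_1,\ldots,\vec t_n\}$ and suppose a linear relation $\sum_{i=1}^{n}c_i\vec t_i=\vec 0$. Taking the inner product with $\vec t_j$ and using $\vec t_i\cdot\vec t_j=(a-b)\delta_{ij}+b$ gives
\begin{equation*}
(a-b)c_j+b\sum_{i=1}^{n}c_i=0,\qquad j=1,\ldots,n.
\end{equation*}
Since $a\neq b$, this forces all the $c_j$ to take a single common value $c$, and then substituting back yields $c\bigl(a+b(n-1)\bigr)=0$.

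The crux of the argument is therefore to verify that $a+b(n-1)\neq 0$. For this I would first exploit the fact that the full set $V$ contains $n+1$ vectors in $\mathbb{R}^n$ and hence must be linearly dependent: running the same inner-product computation on a non-trivial relation $\sum_{i=1}^{n+1}\alpha_i\vec t_i=\vec 0$ shows that all $\alpha_i$ must be equal to some common $\alpha\neq 0$, and this in turn forces $a+bn=0$, i.e.\ $b=-a/n$. Substituting into the quantity of interest gives $a+b(n-1)=a/n$, which is nonzero because $a\neq 0$ by \defref{d:simplex}. Consequently $c=0$, all $c_i$ vanish, and the $n$ vectors are independent.

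The main obstacle is nothing more than the bookkeeping that connects the constraint $a\neq b\neq 0$ in the definition with the two eigenvalues, $a-b$ and $a+b(n-1)$, of the Gram matrix $(a-b)I_n+bJ_n$ of any $n$-subset of $V$; the non-vanishing of the first eigenvalue is immediate from the hypothesis, while the non-vanishing of the second relies on the preliminary determination $b=-a/n$ coming from the dependence of the full set of $n+1$ vertices. Once both eigenvalues are seen to be nonzero, the Gram matrix is invertible and the linear independence of any $n$-subset of $V$ follows.
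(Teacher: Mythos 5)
Your proof is correct, but it takes a genuinely different route from the paper's. The paper also reduces the claim to showing that $\sum_{i=1}^{n}\lambda_i\vec t_i=\vec 0$ has only the trivial solution, but its key move is to dot the relation with the one \emph{omitted} vector $\vec t_{n+1}$: every cross term equals $b$, and since $b\neq 0$ (an observation recorded just after \defref{d:simplex}), this immediately gives $\sum_i\lambda_i=0$; dotting with each $\vec t_j$ then yields $(a-b)\lambda_j=0$, hence $\lambda_j=0$ because $a\neq b$. You instead work entirely inside the chosen $n$-subset, reduce the question to the invertibility of the Gram matrix $(a-b)I_n+bJ_n$ (with $J_n$ the all-ones matrix), and dispose of the potentially vanishing eigenvalue $a+(n-1)b$ by first proving $a+nb=0$ from the forced linear dependence of all $n+1$ vectors in $\R^n$. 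Both arguments are sound. The paper's is shorter because the omitted vertex does in one line the work of your eigenvalue computation; yours has the side benefit of establishing $b=-a/n$ en route (a relation the paper only obtains two results later, as equation \eqref{e:b(a)}), of making explicit that the real content is the non-vanishing of the two eigenvalues of the Gram matrix, and of not needing the paper's separate argument that $b\neq 0$, since that now falls out of $b=-a/n$ and $a\neq 0$.
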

\begin{proof}
Given the symmetry between the vectors of $V$  we can, without loss of generality, prove the proposition for the subset $\{\vec t_i\}_{i=1}^{n}$. Therefore we have to prove that
\begin{equation}\label{e:indlint}
    \sum_{i=1}^{n}\lambda_i\vec t_i=\vec 0,
\end{equation}
has only the trivial solution. We multiply \eqref{e:indlint} by $\vec t_{n+1}$ and the result is
\begin{equation*}
    b\sum_{i=1}^{n}\lambda_i=0,
\end{equation*}
and because $b\neq 0$
\begin{equation}\label{e:sumlambda}
    \sum_{i=1}^{n}\lambda_i=0.
\end{equation}

Now we multiply \eqref{e:indlint} by $\vec t_j$, where $j\in\{1,2,\ldots,n\}$
\begin{align*}
    \sum_{i=1}^{n}\lambda_i\vec t_i\cdot\vec t_j&=0\\
    \sum_{i=1}^{n}\lambda_i[(a-b)\delta_{ij}+b]&=0\\
    (a-b)\lambda_j+b\sum_{i=1}^{n}\lambda_i&=0\\
    (a-b)\lambda_j&=0 &&\text{(by equation \eqref{e:sumlambda})}\\
    \lambda_j &=0, &&\text{(because $a\neq b$)}
\end{align*}
\end{proof}

\begin{proposition}
The sum of all the elements of $V$ is null.
\end{proposition}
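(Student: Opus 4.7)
The plan is to combine two ingredients: first, any $n+1$ vectors in $\R^n$ must be linearly dependent; second, the highly symmetric inner product structure $\vec{t}_i\cdot\vec{t}_j=(a-b)\delta_{ij}+b$ forces any such linear dependence to have all of its coefficients equal. A simple rescaling then yields $\sum_{i=1}^{n+1}\vec{t}_i=\vec{0}$.

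Concretely, I would start from a nontrivial relation
\[
\sum_{i=1}^{n+1}\lambda_i\vec{t}_i=\vec{0},
\]
whose existence follows from a dimension count (the $n+1$ vectors of $V$ live in $\R^n$). Taking the dot product of this identity with an arbitrary $\vec{t}_j$ and applying the inner product formula gives
\[
(a-b)\lambda_j+b\,\Lambda=0\qquad\text{for every } j\in\{1,\ldots,n+1\},
\]
where $\Lambda:=\sum_i\lambda_i$. Since $a\neq b$, this yields $\lambda_j=-b\Lambda/(a-b)$, a value independent of $j$; hence all the $\lambda_j$ coincide with a common scalar $\lambda$. Because the $\lambda_i$ are not all zero, $\lambda\neq 0$, and dividing the original relation by $\lambda$ produces the desired identity.

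I do not foresee any real obstacle. The only point that might deserve a remark is the existence of the nontrivial linear combination, which is purely dimensional; if one prefers, the same conclusion can be extracted from the Gram matrix $G=(a-b)I_{n+1}+bJ$ (with $J$ the all-ones matrix), whose determinant $(a-b)^n(a+nb)$ is forced to vanish since the rank of $G$ is bounded by $\dim\mathrm{span}(V)\le n$, and whose null space is then spanned by the all-ones vector. Note that the argument runs in close parallel to \propref{p:nbasis}, the difference being that one now sweeps over the full collection of $n+1$ vectors rather than a subset of $n$ of them.
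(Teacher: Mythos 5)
Your proof is correct, and it takes a slightly but genuinely different route from the paper's. The paper first invokes \propref{p:nbasis} to assert that $\{\vec t_i\}_{i=1}^{n}$ is a basis, writes $\vec t_{n+1}=\sum_{i=1}^{n}\lambda_i\vec t_i$, and then solves for the coefficients by dotting with $\vec t_{n+1}$ (to get $a=b\sum_i\lambda_i$) and with each $\vec t_j$, $j\le n$, finding $\lambda_j=-1$ explicitly. You instead start from the weaker, purely dimensional fact that any $n+1$ vectors in $\R^n$ admit a nontrivial dependence, and use the Gram structure to show all coefficients in that dependence must coincide; the computation $(a-b)\lambda_j+b\Lambda=0$ is the same algebraic step as in the paper, but your setup treats all $n+1$ vectors symmetrically and does not rely on \propref{p:nbasis} at all. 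What the paper's route buys is the explicit value $\lambda_j=-1$, which feeds directly into the proof of the corollary $b=-a/n$ via equation \eqref{e:absumlambda}; what your route buys is independence from the preceding proposition and, via the Gram-matrix remark (whose determinant factorization $(a-b)^n(a+nb)$ is correct), the relation $a+nb=0$ as a free by-product, so the corollary comes along at no extra cost. One small point worth making explicit in your write-up: the conclusion $\lambda_j=-b\Lambda/(a-b)$ together with nontriviality forces $\Lambda\neq0$ and hence $b\neq0$, which is consistent with (and re-derives) the paper's observation that $b$ cannot vanish; your argument nowhere presupposes it, so there is no gap.
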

\begin{proof}
By \propref{p:nbasis} we know that $S=\{\vec t_i\}_{i=1}^{n}$ is a basis of $\R^n$, so we can express $\vec t_{n+1}$ as a linear combination of S-vectors
\begin{equation}\label{e:coordtn+1}
    \vec t_{n+1}=\sum_{i=1}^{n}\lambda_i\vec t_i
\end{equation}
Multiplying \eqref{e:coordtn+1} by $\vec t_{n+1}$ we obtain
\begin{equation}\label{e:absumlambda}
    a=b\sum_{i=1}^{n}\lambda_i.
\end{equation}
Now we multiply \eqref{e:coordtn+1} by $\vec t_j$, with $j\in\{1,2,\ldots,n\}$
\begin{align*}
    b&=\sum_{i=1}^{n}\lambda_i[(a-b)\delta_{ij}+b]\\
    b&=(a-b)\lambda_j+a &&\text{(by equation \eqref{e:absumlambda})},
\end{align*}
and now is immediate that
\begin{equation*}
    \lambda_j=-1,\quad j\in\{1,2,\ldots,n\}.
\end{equation*}
Substituting in \eqref{e:coordtn+1}
\begin{equation*}
    \vec t_{n+1}=-\sum_{i=1}^{n}\vec t_i
\end{equation*}
or
\begin{equation*}
    \sum_{i=1}^{n+1}\vec t_i=\vec 0.
\end{equation*}
\end{proof}

\begin{corollary}
The relation between $a$ and $b$ is
\begin{equation}\label{e:b(a)}
    b=-\frac{a}{n}.
\end{equation}
\begin{proof}
If in equation \eqref{e:absumlambda} we substitute the values found for $\lambda_i$ we immediately obtain \eqref{e:b(a)}.
\end{proof}
\end{corollary}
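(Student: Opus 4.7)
The plan is to exploit the previous proposition directly: its proof already established the value of every coefficient $\lambda_j$ in the expansion $\vec{t}_{n+1} = \sum_{i=1}^{n}\lambda_i\vec{t}_i$, namely $\lambda_j = -1$ for each $j \in \{1,2,\ldots,n\}$. With these values in hand, equation \eqref{e:absumlambda}, which reads $a = b\sum_{i=1}^{n}\lambda_i$, becomes $a = -nb$. Dividing by $-n$ (which is nonzero since the simplex has at least two vertices) yields $b = -a/n$, and the corollary follows. So the argument is a single substitution and needs no further machinery.

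As a sanity check I would also derive the relation by a more symmetric route: take the squared norm of the identity $\sum_{i=1}^{n+1}\vec{t}_i = \vec{0}$ from the previous proposition. Expanding gives
\begin{equation*}
    0 = \sum_{i,j=1}^{n+1}\vec{t}_i\cdot\vec{t}_j = (n+1)a + (n+1)n\,b,
\end{equation*}
so $a + nb = 0$, which is the same conclusion. This alternative avoids referring to the specific values of the $\lambda_j$'s and so gives an independent verification.

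There is no real obstacle here; the statement is a genuine corollary. The only thing to be careful about is to invoke the correct earlier result, specifically the identification $\lambda_j = -1$ from within the previous proof (or, equivalently, the vanishing of $\sum_{i=1}^{n+1}\vec{t}_i$ together with bilinearity of the inner product), and to note that the division by $n$ is legitimate because a regular simplex, by \defref{d:simplex}, has $n+1 \geq 2$ vertices.
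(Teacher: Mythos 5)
Your main argument is exactly the paper's proof: substitute $\lambda_i=-1$ into equation \eqref{e:absumlambda} to get $a=-nb$, hence $b=-a/n$. The additional verification via expanding $\bigl(\sum_{i=1}^{n+1}\vec t_i\bigr)^2=0$ is a correct and pleasant cross-check, but the core route is the same as the paper's.
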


Then $a$ remains as a free parameter that fixes the scale of $\Delta_{n+1}$. In view of the last corollary I think that a convenient value for $a$ is $a=n\Rightarrow b=-1$ because then many $1/n$ factors dissapear. With these values the inner product between vectors of $V$ reads as follows
\begin{equation}\label{e:titj}
    \vec t_i\cdot\vec t_j=(n+1)\delta_{ij}-1\quad i,j\in\{1,2,\ldots,n+1\}\,,
\end{equation}
that is, $\norm{\vec t_i}=\sqrt n$ and $\vec t_i\cdot\vec t_j=-1$ when $i\neq j$.

In a regular simplex all the m-facets, m-dimensional facets, are at the same distance from the origin.

\begin{proposition}
The distance from the origin to the m-facets is
\begin{equation}\label{e:dm}
    d_m=\sqrt{\frac{n-m}{m+1}}
\end{equation}
\end{proposition}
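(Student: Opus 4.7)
My plan is to pick a specific $m$-facet (possible by the symmetry noted in the statement) and compute the distance from the origin to the centroid of that facet, then argue the centroid realizes the minimum distance.

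First, without loss of generality, take the $m$-facet $F$ spanned by the vertices $\vec t_1,\ldots,\vec t_{m+1}$, and let
\begin{equation*}
    \vec c=\frac{1}{m+1}\sum_{i=1}^{m+1}\vec t_i
\end{equation*}
be its centroid. I would next show that $\vec c$ is the orthogonal projection of the origin onto the affine hull of $F$. For this it suffices to check that $\vec c$ is orthogonal to every difference $\vec t_i-\vec t_j$ with $i,j\in\{1,\ldots,m+1\}$. Using \eqref{e:titj},
\begin{equation*}
    \vec c\cdot\vec t_k=\frac{1}{m+1}\sum_{i=1}^{m+1}\bigl[(n+1)\delta_{ik}-1\bigr]=\frac{n+1-(m+1)}{m+1}=\frac{n-m}{m+1},
\end{equation*}
which is independent of $k$, so $\vec c\cdot(\vec t_i-\vec t_j)=0$. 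Since $\vec c$ is a convex combination of the vertices with all strictly positive coefficients $1/(m+1)$, it actually lies in $F$, so it realizes the minimum distance from $\vec 0$ to the facet.

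The computation of $\norm{\vec c}$ is then immediate from \eqref{e:titj}:
\begin{equation*}
    \norm{\vec c}^2=\frac{1}{(m+1)^2}\sum_{i,j=1}^{m+1}\vec t_i\cdot\vec t_j=\frac{1}{(m+1)^2}\bigl[(n+1)(m+1)-(m+1)^2\bigr]=\frac{n-m}{m+1},
\end{equation*}
which gives exactly \eqref{e:dm}. There is no real obstacle here; the only subtlety is verifying that the centroid is the foot of the perpendicular from the origin rather than just assuming it by symmetry, and this is handled by the orthogonality check above.
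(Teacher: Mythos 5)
Your proof is correct and follows essentially the same route as the paper: compute the centroid of a representative $m$-facet and evaluate its squared norm via \eqref{e:titj}. The only difference is that you explicitly verify the centroid is the foot of the perpendicular from the origin (via the orthogonality check $\vec c\cdot(\vec t_i-\vec t_j)=0$), a step the paper leaves implicit by appealing to symmetry; this is a welcome tightening but not a different argument.
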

\begin{proof}
Because all m-facets are equidistant from the origin we can take one particular m-facet, for example that defined by the vectors $\{\vec t_i\}_{i=1}^{m+1}$. The centroid of this m facet is in the position
\begin{equation*}
    \vec F_m=\frac{1}{m+1}\sum_{i=1}^{m+1}\vec t_i
\end{equation*}
then
\begin{align*}
    d_m^2&=\vec F_m^2=\frac{1}{(m+1)^2}\sum_{i,j=1}^{m+1}\vec t_i\cdot\vec t_j\\
    &=\frac{1}{(m+1)^2}\sum_{i,j=1}^{m+1}\left[(n+1)\delta_{ij}-1\right] &&\text{(by equation \eqref{e:titj})}\\
    &=\frac{n-m}{m+1}
\end{align*}
\end{proof}

Among these distances are of special interest the radius of the inner sphere and the radius of the outer sphere. The outer sphere is the $(n-1)$-sphere centered at the origin that contains $\Delta_{n+1}$ and such that its radius is minimal, evidently its radius is the distance from the origin to the 0-facets (the vertices of the simplex)
\begin{equation}\label{e:Rout}
    R_{out}=d_0=\norm{\vec t_i}=\sqrt{n}.
\end{equation}
The inner sphere is the $(n-1)$-sphere contained in $\Delta_{n+1}$ centered at the origin and such that its radius is maximal. From \eqref{e:dm} we see that
\begin{equation}\label{e:d_mseq}
    0=d_n<d_{n-1}<\cdots<d_0=\sqrt{n}\,,
\end{equation}
then the inner sphere has radius
\begin{equation}\label{e:Rin}
    R_{in}=d_{n-1}=\frac{1}{\sqrt n}
\end{equation}
with a greater radius the sphere would have points situated beyond the $(n-1)$-facets, and therefore outside $\Delta_{n+1}$.

As we noted \defref{d:simplex} provides a bijective map from probability distributions over $n+1$ outcomes, $\{p_i\}_{i=1}^{n+1}$, to $\Delta_{n+1}$, the map is
\begin{align*}
    f_{\Delta_{n+1}}:\mathcal{P}_{n+1} &\rightarrow \Delta_{n+1}\\
    \{p_i\}_{i=1}^{n+1} &\mapsto \vec s=\sum_{i=1}^{n+1}p_i\vec t_i\,,
\end{align*}
where we denote this map by $f_{\Delta_{n+1}}$ because the vectors $\vec s$ depends on the election of simplex $\Delta_{n+1}$, we have also introduced the symbol $\mathcal{P}_{n+1}$ to denote the set of all probability distributions over $n+1$ outcomes.

In the next proposition we will see how to recover a probability distribution from a given vector $\vec s\in\Delta_{n+1}$.

\begin{proposition}
    The inverse of the map defined above is
    \begin{align}\label{e:finv}
        f_{\Delta_{n+1}}^{-1}: \Delta_{n+1} &\rightarrow \mathcal{P}_{n+1}\notag\\
        \vec s &\mapsto \left\{p_i =\frac{1}{n+1}(\vec s\cdot \vec t_i+1)\right\}_{i=1}^{n+1}
    \end{align}
\end{proposition}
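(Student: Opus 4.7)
The plan is to verify the formula by a direct computation: given $\vec s\in\Delta_{n+1}$, write $\vec s=\sum_{j=1}^{n+1}p_j\vec t_j$ using the definition of the simplex, dot both sides with $\vec t_i$, and invoke equation \eqref{e:titj} together with the normalization $\sum_j p_j=1$ to extract $p_i$.

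Concretely, I would first compute
\begin{equation*}
    \vec s\cdot\vec t_i=\sum_{j=1}^{n+1}p_j\,\vec t_j\cdot\vec t_i=\sum_{j=1}^{n+1}p_j\bigl[(n+1)\delta_{ij}-1\bigr]=(n+1)p_i-\sum_{j=1}^{n+1}p_j=(n+1)p_i-1,
\end{equation*}
so solving for $p_i$ yields exactly $p_i=\tfrac{1}{n+1}(\vec s\cdot\vec t_i+1)$, matching \eqref{e:finv}. This derivation simultaneously shows that the $p_i$ recovered by the stated formula must coincide with the coefficients in any representation of $\vec s$ as a convex combination of the vertices, giving uniqueness.

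To finish, I would check that the right-hand side of \eqref{e:finv} really lands in $\mathcal{P}_{n+1}$. Summing the proposed values and using $\sum_{i=1}^{n+1}\vec t_i=\vec 0$ (from the second proposition above) gives $\sum_{i=1}^{n+1}p_i=\tfrac{1}{n+1}\bigl(\vec s\cdot\vec 0+(n+1)\bigr)=1$. Non-negativity $p_i\ge 0$ then follows because $\vec s$ lies in the convex hull of $V$: the unique coefficients expressing $\vec s$ as an affine combination of the $\vec t_i$ are those just computed, and convex hull membership forces them to be non-negative.

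The only subtle point, and the one I would be most careful about, is the uniqueness step that justifies calling the formula an inverse rather than merely one-sided: since $V$ has $n+1$ vectors in $\R^n$, the representation $\vec s=\sum p_j\vec t_j$ is only unique under the constraint $\sum p_j=1$. The inner-product calculation above implicitly handles this, because it shows that \emph{any} such representation must have coefficients equal to $\tfrac{1}{n+1}(\vec s\cdot\vec t_i+1)$; thus $f_{\Delta_{n+1}}^{-1}\circ f_{\Delta_{n+1}}=\mathrm{id}$, and the affine-sum check above gives $f_{\Delta_{n+1}}\circ f_{\Delta_{n+1}}^{-1}=\mathrm{id}$ on $\Delta_{n+1}$.
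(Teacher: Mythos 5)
Your proposal is correct and follows essentially the same route as the paper: dot $\vec s=\sum_j p_j\vec t_j$ with $\vec t_i$, apply \eqref{e:titj} and $\sum_j p_j=1$, and solve for $p_i$. The extra checks you add (that the recovered $p_i$ sum to one via $\sum_i\vec t_i=\vec 0$, are non-negative by convex-hull membership, and are unique) are sound and make explicit some points the paper leaves implicit, but they do not change the underlying argument.
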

\begin{proof}
    \begin{align*}
        \vec s &= \sum_{j=1}^{n+1}p_j\vec t_j\\
        \vec s\cdot \vec t_i &= \sum_{j=1}^{n+1}p_j\vec t_j\cdot \vec t_i\\
        \vec s\cdot \vec t_i &= \sum_{j=1}^{n+1}p_j[(n+1)\delta_{ij}-1]&\text{(by \eqref{e:titj})}\,,
    \end{align*}
Then
\begin{equation}\label{e:p(v)}
    p_i =\frac{1}{n+1}(\vec s\cdot \vec t_i+1)
\end{equation}
\end{proof}

It is also interesting the following relation

\begin{proposition}
    If $\vec s\in\Delta_{n+1}$ and $\{p_i\}_{i=1}^{n+1}$ is its corresponding probability distribution, then
    \begin{equation}\label{e:p2(s)}
        \sum_{i=1}^{n+1}p_i^2=\frac{1}{n+1}(\vec s^2+1)
    \end{equation}
\end{proposition}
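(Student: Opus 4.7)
The plan is to use the forward map $\vec s=\sum_{i=1}^{n+1}p_i\vec t_i$ directly and reduce $\vec s^{\,2}$ to $\sum_i p_i^2$ by exploiting the inner product formula \eqref{e:titj}. Concretely, I would expand
\begin{equation*}
    \vec s^{\,2}=\Bigl(\sum_{i=1}^{n+1}p_i\vec t_i\Bigr)\cdot\Bigl(\sum_{j=1}^{n+1}p_j\vec t_j\Bigr)=\sum_{i,j=1}^{n+1}p_ip_j\,\vec t_i\cdot\vec t_j,
\end{equation*}
then substitute $\vec t_i\cdot\vec t_j=(n+1)\delta_{ij}-1$ from \eqref{e:titj}. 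This gives
\begin{equation*}
    \vec s^{\,2}=(n+1)\sum_{i=1}^{n+1}p_i^2-\Bigl(\sum_{i=1}^{n+1}p_i\Bigr)^{\!2}.
\end{equation*}

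The second key ingredient is the normalization $\sum_{i}p_i=1$ coming from $\{p_i\}\in\mathcal{P}_{n+1}$, which kills the second term after yielding $-1$. Solving for $\sum_i p_i^2$ immediately produces the desired identity $\sum_i p_i^2=\tfrac{1}{n+1}(\vec s^{\,2}+1)$.

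There is no real obstacle here: once \eqref{e:titj} is available the argument is two lines of algebra. One alternative route, slightly longer, would be to start from the inverse map \eqref{e:p(v)} and compute $\sum_i p_i^2=\tfrac{1}{(n+1)^2}\sum_i(\vec s\cdot\vec t_i+1)^2$. That approach needs the vanishing sum $\sum_i\vec t_i=\vec 0$ (to kill the cross term) together with the identity $\sum_i(\vec s\cdot\vec t_i)^2=(n+1)\vec s^{\,2}$, which in turn requires $\sum_i\vec t_i\vec t_i^{\,T}=(n+1)I$. Since the forward-map computation avoids establishing this extra tensor identity, I would present the proof in that form.
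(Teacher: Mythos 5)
Your proposal is correct and coincides with the paper's own proof: both expand $\vec s^{\,2}=\sum_{i,j}p_ip_j\,\vec t_i\cdot\vec t_j$, substitute \eqref{e:titj}, and use $\sum_i p_i=1$ to reduce the off-diagonal contribution to $-1$. Nothing is missing.
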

\begin{proof}
    \begin{align*}
        \vec s^2 &= \sum_{i,j=1}^{n+1}p_ip_j\vec t_i\cdot\vec t_j\\
        &= \sum_{i,j=1}^{n+1}p_ip_j [(n+1)\delta_{ij}-1] &&\text{(by \eqref{e:titj})}\\
        &= (n+1)\sum_{i=1}^{n+1}p^2_i-1
    \end{align*}
From which it follows the proposition.
\end{proof}

\begin{corollary}\label{c:p2facet}
    When $\vec s^2=d_m^2$ then
    \begin{equation}\label{e:p2mfacet}
        \sum_{i=1}^{n+1}p_i^2=\frac{1}{m+1}
    \end{equation}
\end{corollary}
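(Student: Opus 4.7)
The plan is essentially a one-line substitution. The preceding proposition gives the identity
\begin{equation*}
    \sum_{i=1}^{n+1}p_i^2=\frac{1}{n+1}(\vec s^2+1)\,,
\end{equation*}
and equation \eqref{e:dm} gives $d_m^2=\frac{n-m}{m+1}$. So I would just set $\vec s^2=d_m^2$ in the first identity, getting $\frac{1}{n+1}\bigl(\frac{n-m}{m+1}+1\bigr)$, combine the fractions inside the parenthesis to produce $\frac{n+1}{m+1}$, and cancel the $n+1$ against the prefactor to land on $\frac{1}{m+1}$.

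There is no real obstacle here; both ingredients are already proved just above, and the arithmetic reduces to common-denominator addition. The only thing worth flagging conceptually is that the corollary has a meaningful content even though the proof is trivial: vectors $\vec s$ lying on the sphere circumscribed to an $m$-facet (which in particular contains all the centroids of $m$-facets) automatically have probability images whose $\ell^2$ norm squared equals $1/(m+1)$. In particular, $m=0$ recovers the vertices with $\sum p_i^2=1$ (pure deterministic distributions), and $m=n-1$ gives $\sum p_i^2 = 1/n$ on the inner sphere, which is the maximally mixed value inflated by one slot — both sanity checks I would mention in passing if I wanted to add context, but they are not needed for the proof itself.
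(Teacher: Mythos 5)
Your proof is correct and is exactly the intended argument: the paper leaves this corollary essentially without proof because it is the immediate substitution of \eqref{e:dm} into \eqref{e:p2(s)} that you carry out. The arithmetic and your sanity checks ($m=0$ giving $1$ at the vertices, $m=n-1$ giving $1/n$ on the inner sphere) are all right.
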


\section{Bloch representation of quantum states}\label{s:BlochQS}
We will denote by $\mathcal{D}_d$ the set of density matrices of order $d$, namely
\begin{equation}\label{e:defD}
    \mathcal{D}_d=\{\rho\in\mathcal{M}_d(\mathbb{C})\; : \;\rho=\rho^{\dag}\, ,\: \rho\geq 0\;\text{and}\;\Tr\rho=1\}.
\end{equation}
Any d-dimensional density matrix $\rho$ can be represented as \cite{FQIT}
\begin{equation}\label{e:bloch}
    \rho=\frac{1}{d}+\sqrt{\frac{d+1}{2d}}\;\vec r \cdot \boldsymbol\sigma\,,
\end{equation}
where the coefficient $\sqrt{\frac{d+1}{2d}}$ has been chosen for later convenience. If the vector, $\vec r\in \R^{d^2-1}$, in \eqref{e:bloch} is such that the corresponding $\rho$ is a true density matrix, then this vector will be called the Bloch vector associated to $\rho$. The set of all Bloch vectors in $\R^{d^2-1}$ will be denoted by $\mathcal{B}_{d^2-1}$. The components of the vector $\boldsymbol\sigma=(\sigma_1,\sigma_2,\ldots,\sigma_{d^2-1})$ are hermitian matrices with null trace which form a basis of the algebra $\mathfrak{su}(d)$ and we follow the convention that $scalar+square\: matrix$ is read as $scalar I+square\: matrix$, where $I$ is the identity matrix of the appropriate order.

Some useful formulae involving the $\sigma_a$'s are, see for example appendix 2 of \cite{GQS},
\begin{align}
    [\sigma_a,\sigma_b] &=2\im f_{abc}\sigma_c\,,\label{e:sigmacommutator}\\
    \{\sigma_a,\sigma_b\} &=\frac{4}{d}\:\delta_{ab}+2\:d_{abc}\sigma_c\,,\label{e:sigmaanticommutator}\\
    \Tr(\sigma_a\sigma_b) &=2\:\delta_{ab}\,,\label{e:trsigmasigma}\\
    \Tr(\sigma_a\sigma_b\sigma_c) &=2\:d_{abc}+2\im f_{abc}\label{e:trsigmasigmasigma},
\end{align}
where we sum over repeated indices, $f_{abc}\in\R$ is totally antisymmetric, $d_{abc}\in\R$ is totally symmetric, traceless and is identically null when $d=2$.

In the next proposition we see what conditions have to be fulfilled by $\vec r$ so that $\rho$ is a pure state. This proposition is enunciated in \cite{GQS} (eq. 8.24, p. 215) although with a different normalization for the Bloch vectors.

\begin{proposition}
    $\rho$ is a pure state if and only if the associated Bloch vector, $\vec r$, satisfies
        \begin{equation}\label{e:normr}
            \vec r^2=\frac{d-1}{d+1}
        \end{equation}
    and
        \begin{equation}\label{e:r*r}
            \vec r\ast\vec r=(d-2)\sqrt{\frac{2}{d(d+1)}}\:\vec r\,,
        \end{equation}
    where $(\vec r\ast\vec r)_c=d_{abc}\,r_a\,r_b$
\end{proposition}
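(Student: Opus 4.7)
The plan is to use the standard characterization that a density matrix is pure if and only if $\rho^2=\rho$, and then read off the two claimed conditions by matching the scalar part and the $\bld\sigma$-part of both sides. Since $\{I,\sigma_1,\ldots,\sigma_{d^2-1}\}$ is linearly independent in $\mathcal{M}_d(\mathbb C)$, every identity between such expressions decomposes uniquely into an identity for the scalar coefficient and an identity for the vector of coefficients of $\bld\sigma$.

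First I would compute $(\vec r\cdot\bld\sigma)^2=r_ar_b\sigma_a\sigma_b$. Using $\sigma_a\sigma_b=\tfrac12\{\sigma_a,\sigma_b\}+\tfrac12[\sigma_a,\sigma_b]$ together with \eqref{e:sigmaanticommutator} and \eqref{e:sigmacommutator}, the antisymmetric piece $\im f_{abc}\sigma_c$ is killed upon contraction with the symmetric tensor $r_ar_b$, so
\begin{equation*}
(\vec r\cdot\bld\sigma)^2=\frac{2}{d}\,\vec r^{\,2}+(\vec r\ast\vec r)\cdot\bld\sigma.
\end{equation*}
Substituting into the Bloch form \eqref{e:bloch} gives
\begin{equation*}
\rho^2=\frac{1+(d+1)\vec r^{\,2}}{d^2}+\left[\frac{2}{d}\sqrt{\frac{d+1}{2d}}\,\vec r+\frac{d+1}{2d}\,\vec r\ast\vec r\right]\cdot\bld\sigma.
\end{equation*}
Imposing $\rho^2=\rho$ and using linear independence, the scalar equation $\tfrac{1+(d+1)\vec r^{\,2}}{d^2}=\tfrac{1}{d}$ is equivalent to \eqref{e:normr}, and the vector equation, after pulling the factor $\sqrt{(d+1)/(2d)}\,\vec r$ to the left and cleaning up, is equivalent to \eqref{e:r*r}.

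For the converse direction I would reverse the same calculation: the two conditions \eqref{e:normr} and \eqref{e:r*r} force $\rho^2=\rho$. Because $\vec r$ is real and the $\sigma_a$ are Hermitian with null trace, $\rho$ is automatically Hermitian with $\Tr\rho=1$. An orthogonal projection with trace one is a rank-one projector, hence positive semidefinite and pure, so $\rho\in\mathcal{D}_d$ is a genuine pure state.

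I do not expect any real obstacle; the content is entirely algebraic. The only delicate step is the bookkeeping of the normalization factor, so that the coefficient in \eqref{e:r*r} comes out as $(d-2)\sqrt{2/(d(d+1))}$ rather than something equivalent but less tidy. The choice of $\sqrt{(d+1)/(2d)}$ in \eqref{e:bloch} is exactly what makes this coefficient, and the $(d-1)/(d+1)$ in \eqref{e:normr}, take their stated clean form.
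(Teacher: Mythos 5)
Your proposal is correct and follows essentially the same route as the paper: expand $\rho^2$ via the Bloch form, use the anticommutator relation \eqref{e:sigmaanticommutator} to kill the antisymmetric part, and match the scalar and $\bld\sigma$ coefficients by linear independence of $\{I,\sigma_1,\ldots,\sigma_{d^2-1}\}$. Your added remark that a trace-one Hermitian idempotent is automatically a positive rank-one projector is a slightly more careful treatment of the converse than the paper gives, but it does not change the argument.
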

\begin{proof}
$\rho$ is a pure state if and only if $\rho^2=\rho$
\begin{align*}
    \rho^2 &= \frac{1}{d^2}+\frac{d+1}{2d}\;r_a r_b \sigma_a\sigma_b+\frac{1}{d}\sqrt{\frac{2(d+1)}{d}}\vec r \cdot \boldsymbol\sigma\\
    &=\frac{1}{d^2}+\frac{d+1}{4d}\;r_a r_b \{\sigma_a,\sigma_b\}+\frac{1}{d}\sqrt{\frac{2(d+1)}{d}}\vec r \cdot \boldsymbol\sigma &&\text{($r_ar_b$ is symmetric in $a,b$)}\\
    &=\frac{1}{d^2}+\frac{d+1}{4d}\;r_a r_b\left[\frac{4}{d}\:\delta_{ab}+2\:d_{abc}\sigma_c\right] +\frac{1}{d}\sqrt{\frac{2(d+1)}{d}}\vec r \cdot \boldsymbol\sigma &&\text{(by equation \eqref{e:sigmaanticommutator})}\\
    &=\frac{1}{d^2}+\frac{d+1}{d^2}\;\vec r^2+\frac{d+1}{2d}(\vec r\ast\vec r)\cdot\boldsymbol\sigma +\frac{1}{d}\sqrt{\frac{2(d+1)}{d}}\vec r \cdot \boldsymbol\sigma\\
    &=\frac{1}{d}\left(\frac{1}{d}+\frac{d+1}{d}\;\vec r^2\right)+\sqrt{\frac{d+1}{2d}}\left(\sqrt{\frac{d+1}{2d}}(\vec r\ast\vec r)+\frac{2}{d}\vec r\right) \cdot \boldsymbol\sigma
\end{align*}
Now we impose that this linear combination of $I_d$, the identity matrix of order $d$, and the $\sigma_a$ matrices is equal to the linear combination of these same matrices in \eqref{e:bloch}. Because the $\sigma_a$ matrices are traceless and they satisfy \eqref{e:trsigmasigma}, we see that the matrices of the set $\{I_d,\sigma_1,\sigma_2,\ldots,\sigma_{d^2-1}\}$ are orthogonal, with respect to the Hilbert-Schmidt inner product, so they are linearly independent, in fact, they form a basis of the hermitian matrices of order $d$, then two linear combinations of these matrices are equal if and only if its coefficients are equal. In our case this means that
\begin{align*}
    \frac{1}{d}+\frac{d+1}{d}\;\vec r^2 &=1 &&\text{and}\\
    \sqrt{\frac{d+1}{2d}}(\vec r\ast\vec r)+\frac{2}{d}\vec r &=\vec r
\end{align*}
From which we obtain \eqref{e:normr} and \eqref{e:r*r}.
\end{proof}

Equation \eqref{e:bloch} defines a bijective map
\begin{align}\label{e:qmap}
    q_{\boldsymbol\sigma}:\mathcal{B}_{d^2-1} &\rightarrow \mathcal{D}_d\notag\\
    \vec r &\mapsto \rho=\frac{1}{d}+\sqrt{\frac{d+1}{2d}}\;\vec r \cdot \boldsymbol\sigma\,,
\end{align}
We write $q_{\boldsymbol\sigma}$ because this map is fixed once we have chosen the basis for $\mathfrak{su}(d)$. It is interesting to find its inverse

\begin{proposition}
    The inverse of the map $q_{\boldsymbol\sigma}$ is
    \begin{align}\label{e:qmapinv}
    q_{\boldsymbol\sigma}^{-1}:\mathcal{D}_d &\rightarrow \mathcal{B}_{d^2-1}\notag\\
    \rho &\mapsto \vec r=\sqrt{\frac{d}{2(d+1)}}\Tr(\rho\boldsymbol\sigma).
\end{align}
\begin{proof}
    \begin{align*}
    \rho &= \frac{1}{d}+\sqrt{\frac{d+1}{2d}}\;r_a \sigma_a \\
    \Tr(\rho\sigma_b) &=\Tr\left(\frac{1}{d}\sigma_b+\sqrt{\frac{d+1}{2d}}\;r_a \sigma_a\sigma_b\right)\\
    \Tr(\rho\sigma_b) &=\sqrt{\frac{d+1}{2d}}\;r_a\Tr(\sigma_a\sigma_b)&&\text{(because $\sigma_b$ is traceless)}\\
    \Tr(\rho\sigma_b) &=\sqrt{\frac{2(d+1)}{d}}\;r_b  &&\text{(by \eqref{e:trsigmasigma})}
    \end{align*}
\end{proof}
\end{proposition}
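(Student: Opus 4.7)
The plan is to directly invert the defining linear relation \eqref{e:bloch} by exploiting the orthogonality of the basis $\{I_d,\sigma_1,\ldots,\sigma_{d^2-1}\}$ under the Hilbert--Schmidt inner product. Since this basis is orthogonal, recovering the coefficients of an expansion in it is a one-line matter of taking traces against each basis element.

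Concretely, I would start from $\rho=\frac{1}{d}+\sqrt{\frac{d+1}{2d}}\;r_a \sigma_a$, multiply on the right by $\sigma_b$, and take the trace. The contribution from the identity piece drops out immediately because $\Tr(\sigma_b)=0$ (the $\sigma_a$'s form a basis of $\mathfrak{su}(d)$, hence are traceless). What remains is $\Tr(\rho\sigma_b)=\sqrt{\frac{d+1}{2d}}\;r_a\Tr(\sigma_a\sigma_b)$, and invoking \eqref{e:trsigmasigma} collapses this sum via the Kronecker delta to $\Tr(\rho\sigma_b)=\sqrt{\frac{2(d+1)}{d}}\;r_b$. Solving for $r_b$ and assembling the components yields $\vec r=\sqrt{\frac{d}{2(d+1)}}\Tr(\rho\boldsymbol\sigma)$, which is precisely \eqref{e:qmapinv}.

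There is no real obstacle here: the argument is a textbook application of orthogonality, and the only thing to be careful about is tracking the normalization factor $\sqrt{(d+1)/(2d)}$ chosen in \eqref{e:bloch}, which is inverted to $\sqrt{d/(2(d+1))}$ in the final formula. The fact that the map is well defined on all of $\mathcal{D}_d$ and lands in $\mathcal{B}_{d^2-1}$ is automatic: $q_{\boldsymbol\sigma}$ was already asserted bijective, so the formula derived here is by construction its two-sided inverse.
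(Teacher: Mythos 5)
Your proposal follows exactly the same route as the paper: multiply \eqref{e:bloch} by $\sigma_b$, take the trace, use the tracelessness of $\sigma_b$ to eliminate the identity term, and invoke \eqref{e:trsigmasigma} to isolate $r_b$. The argument is correct and coincides with the paper's own proof.
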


\section{Bloch representation of a SIC-POVM}\label{s:SICBloch}
First we define a SIC-POVM, see for example \cite{SICQM}
\begin{definition}
    A set of positive operators $\{E_i\}_{i=1}^{d^2}$ is a SIC-POVM, for $d$-dimensional systems, if the following conditions are satisfied
\begin{align}\label{e:defSIC}
    E_i &=\frac{1}{d}\rho_i\,,\quad\text{with $\rho_i$ a pure state and $i\in\{1,2,\ldots,d^2\}$}.\notag\\
    \sum_{i=1}^{d^2}E_i &= 1\,.\\
    \Tr(E_jE_j)&=\frac{d\delta_{ij}+1}{d^2(d+1)}\,,\quad\text{with $i,j\in\{1,2,\ldots,d^2\}$}\notag.
\end{align}
\end{definition}

From this definition and from the last section we see that the elements of the SIC-POVM can be represented in the following way
\begin{equation}\label{e:E_iBloch}
    E_i=\frac{1}{d^2}+\frac{1}{d}\sqrt{\frac{d+1}{2d}}\;\vec e_i \cdot \boldsymbol\sigma\,.
\end{equation}
Where each $\vec e_i$ satisfies \eqref{e:normr} and \eqref{e:r*r}. Our next task is to find what other conditions vectors $\vec e_i$ satisfy so that the corresponding operators $E_i$ form a SIC-POVM.
\begin{proposition}
    The vectors $\{\vec e_i\}_{i=1}^{d^2}$ defined in \eqref{e:E_iBloch} are the positions of the vertices of a regular and $(d^2-1)$-dimensional simplex.
\end{proposition}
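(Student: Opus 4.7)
The plan is to verify that the vectors $\vec e_i$ satisfy the Gram-matrix condition of \defref{d:simplex}, namely that $\vec e_i\cdot\vec e_j=(a-b)\delta_{ij}+b$ with $a\neq b\neq 0$, and then the rest of \defref{d:simplex} (dimension, convex hull, etc.) follows automatically.

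First, I would compute $\Tr(E_iE_j)$ using the Bloch form \eqref{e:E_iBloch}. Expanding the product $E_iE_j$ gives a constant term, two terms linear in $\bld\sigma$, and a quadratic term $(e_i)_a(e_j)_b\sigma_a\sigma_b$. Taking the trace kills the linear terms by $\Tr\sigma_a=0$, while \eqref{e:trsigmasigma} turns the quadratic term into $2\,\vec e_i\cdot\vec e_j$. The result is
\begin{equation*}
    \Tr(E_iE_j)=\frac{1}{d^3}+\frac{d+1}{d^3}\,\vec e_i\cdot\vec e_j.
\end{equation*}
Setting this equal to the SIC-POVM condition $\Tr(E_iE_j)=\frac{d\delta_{ij}+1}{d^2(d+1)}$ from \eqref{e:defSIC} and solving for the inner product, I expect to obtain
\begin{equation*}
    \vec e_i\cdot\vec e_j=\frac{d^2\,\delta_{ij}-1}{(d+1)^2}.
\end{equation*}
Thus $\vec e_i^{\,2}=\frac{d-1}{d+1}$ (reassuringly matching \eqref{e:normr}) and $\vec e_i\cdot\vec e_j=-\frac{1}{(d+1)^2}$ for $i\neq j$.

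Next, I would identify these with the parameters of \defref{d:simplex}: take $a=\frac{d-1}{d+1}$ and $b=-\frac{1}{(d+1)^2}$. Clearly $b\neq 0$, and $a\neq b$ since $a>0>b$; as a sanity check one can verify $b=-a/n$ with $n=d^2-1$, which is the relation forced by \eqref{e:b(a)}. Since the vectors live in $\R^{d^2-1}$, are $d^2$ in number, and have the required pairwise inner products, they are exactly the vertex vectors $\{\vec t_i\}_{i=1}^{n+1}$ of a regular $(d^2-1)$-dimensional simplex (rescaled by a factor $\sqrt{a/n}=\frac{1}{d+1}$ relative to the canonical normalization $a=n$ adopted after \eqref{e:b(a)}).

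There is no real obstacle here; the only thing to be careful about is the bookkeeping of the normalization constants in \eqref{e:E_iBloch}, because the SIC-POVM elements are normalized so that $E_i=\rho_i/d$ rather than as density matrices themselves, so the Bloch coefficient differs from \eqref{e:bloch} by an overall factor $1/d$. Once that factor is threaded through correctly, matching to \defref{d:simplex} is immediate.
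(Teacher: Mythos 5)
Your proposal is correct and follows essentially the same route as the paper: compute $\Tr(E_iE_j)$ from the Bloch form, impose the SIC condition to get $\vec e_i\cdot\vec e_j=\frac{d^2\delta_{ij}-1}{(d+1)^2}$, and match this to \defref{d:simplex} with $n=d^2-1$, $a=\frac{d-1}{d+1}$, $b=-\frac{1}{(d+1)^2}$. The sanity checks you add (consistency with \eqref{e:normr} and with $b=-a/n$) are not in the paper's proof but are harmless and correct.
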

\begin{proof}
    \begin{align*}
    \Tr(E_iE_j)&=\Tr\left[\left(\frac{1}{d^2}+\frac{1}{d}\sqrt{\frac{d+1}{2d}}\;\vec e_i \cdot \boldsymbol\sigma\right)\left(\frac{1}{d^2}+\frac{1}{d}\sqrt{\frac{d+1}{2d}}\;\vec e_j \cdot \boldsymbol\sigma\right)\right]\\
    &=\Tr\left(\frac{1}{d^4}+\frac{d+1}{2d^3}\,e_{ia}\,e_{jb}\,\sigma_a\,\sigma_b\right)\\
    &=\frac{1}{d^3}+\frac{d+1}{d^3}\,\vec e_i\cdot\vec e_j \quad\text{(by \eqref{e:trsigmasigma})}\\
    &=\frac{d\delta_{ij}+1}{d^2(d+1)} \qquad\text{(Imposing the last condition of \eqref{e:defSIC})}.
\end{align*}
which implies
\begin{equation*}
    \vec e_i\cdot\vec e_j=\frac{d^2\delta_{ij}-1}{(d+1)^2},
\end{equation*}
and we see that the vectors $\{\vec e_i\}_{i=1}^{d^2}$ satisfy \defref{d:simplex} with $n=d^2-1$, $a=(d-1)/(d+1)$, as it should be by \eqref{e:normr}, and $b=-1/(d+1)^2$.
\end{proof}

\section{Probability distributions corresponding to quantum states}\label{s:PQS}
Let
\begin{equation}\label{e:rho(s)}
    \rho=\frac{1}{d}+\sqrt{\frac{d+1}{2d}}\;\vec r \cdot \boldsymbol\sigma
\end{equation}
be a quantum state, mixed or pure. We will find what is the probability distribution over the outcomes of a SIC-POVM, and what is the vector in the corresponding simplex of probability, as described in \secref{s:simplexes}. The probability distribution is
\begin{align}\label{e:p(er)}
    p_i&=\Tr(E_i\rho)\notag\\
    &=\Tr\left [\left(\frac{1}{d^2}+\frac{1}{d}\sqrt{\frac{d+1}{2d}}\;\vec e_i \cdot \boldsymbol\sigma\right)\left(\frac{1}{d}+\sqrt{\frac{d+1}{2d}}\;\vec r \cdot \boldsymbol\sigma\right)\right]\notag\\
    &=\Tr\left(\frac{1}{d^3}+\frac{d+1}{2d^2}\,e_{ia}\,r_b\,\sigma_a\,\sigma_b\right)\notag\\
    &=\frac{1}{d^2}+\frac{d+1}{d^2}\vec e_i\cdot \vec r
\end{align}
Naming by $\mathcal{Q}_{\mathcal{P}}$ the set of all probability distributions over the outcomes of the SIC-POVM $\{E_i\}_{i=1}^{d^2}$ that correspond to a quantum state, then \eqref{e:p(er)} defines the following bijective map
\begin{align}\label{e:mE}
    m_E: \mathcal{B}_{d^2-1} &\rightarrow \mathcal{Q}_{\mathcal{P}}\subset \mathcal{P}_{d^2}\notag\\
    \vec r &\mapsto \left\{p_i=\frac{1}{d^2}+\frac{d+1}{d^2}\vec e_i\cdot \vec r\right\}_{i=1}^{d^2}
\end{align}

To represent this distribution we can choose any regular simplex, $\Delta_{d^2}$, in $\R^{d^2-1}$, the most natural is the one defined by the vectors
\begin{equation*}
    \vec t_i=(d+1)\vec e_i\qquad i\in\{1,2,\ldots,d^2\},
\end{equation*}
because, as we have seen, the vectors $\vec e_i$ define themselves a simplex, the factor is needed so that $\norm{\vec t_i}=\sqrt{d^2-1}$ in accordance with the norm used in \secref{s:simplexes} for these vectors. Now we make use of the bijective map $f_{\Delta_{d^2}}$ defined in \secref{s:simplexes} to define the set $\mathcal{Q}=f_{\Delta_{d^2}}(\mathcal{Q}_{\mathcal{P}})$, this set contains therefore all elements of $\Delta_{d^2}$ corresponding to quantum states. We then have the bijective map $f_{\Delta_{d^2}}|_{\mathcal{Q}_{\mathcal{P}}}$ that we will denote by $g_{\mathcal{Q}_{\mathcal{P}}}$ and is defined therefore as
\begin{align}\label{e:frest}
    g_{\mathcal{Q}_{\mathcal{P}}}:\mathcal{Q}_{\mathcal{P}} &\rightarrow \mathcal{Q}\notag\\
    \{p_i\}_{i=1}^{d^2} &\mapsto \vec s=\sum_{i=1}^{d^2}p_i\vec t_i\:.
\end{align}

Now we can prove the main result of this paper.

\begin{proposition}
    The set of Bloch vectors, and the set of elements of $\Delta_{d^2}$ corresponding to quantum states are the same set. Namely
    \begin{equation}\label{e:B=Q}
        \mathcal{B}_{d^2-1}=\mathcal{Q}\,.
    \end{equation}
\end{proposition}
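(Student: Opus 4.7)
The plan is to show that, under the chosen scaling $\vec t_i = (d+1)\vec e_i$, the composite map $g_{\mathcal{Q}_{\mathcal{P}}}\circ m_E : \mathcal{B}_{d^2-1}\to\mathcal{Q}$ is literally the identity, i.e.\ $\vec s = \vec r$. Since $\mathcal{Q}$ is by construction the image of $\mathcal{B}_{d^2-1}$ under this composition, the equality $\mathcal{B}_{d^2-1}=\mathcal{Q}$ then follows at once as an equality of subsets of $\R^{d^2-1}$.

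I would begin by substituting \eqref{e:mE} and $\vec t_i = (d+1)\vec e_i$ into $\vec s = \sum_{i=1}^{d^2} p_i \vec t_i$, obtaining
\[
\vec s = \frac{d+1}{d^2}\sum_{i=1}^{d^2}\vec e_i \;+\; \frac{(d+1)^2}{d^2}\sum_{i=1}^{d^2}(\vec e_i\cdot\vec r)\,\vec e_i .
\]
Because $\{\vec e_i\}_{i=1}^{d^2}$ are the vertices of a regular $(d^2-1)$-simplex, the vertex-sum relation established for simplex vertices in \secref{s:simplexes} yields $\sum_i\vec e_i=\vec 0$, killing the first term. The whole calculation therefore reduces to evaluating the linear operator $M:=\sum_i \vec e_i\, \vec e_i^{\,T}$ at $\vec r$.

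The key lemma I will need is $M = \frac{d^2}{(d+1)^2}\,I_{d^2-1}$. I would prove it by acting on an arbitrary vertex: using $\vec e_i\cdot\vec e_j = [d^2\delta_{ij}-1]/(d+1)^2$, the computation $M\vec e_j = \sum_i(\vec e_i\cdot\vec e_j)\vec e_i$ splits into a diagonal contribution $\frac{d^2}{(d+1)^2}\vec e_j$ and a term proportional to $\sum_i\vec e_i=\vec 0$. Thus $M$ acts on each $\vec e_j$ as the stated scalar, and since by \propref{p:nbasis} any $d^2-1$ of the $\vec e_i$ form a basis of $\R^{d^2-1}$, this forces $M=\frac{d^2}{(d+1)^2}I_{d^2-1}$ on all of $\R^{d^2-1}$. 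Plugging back, the factor $(d+1)^2/d^2$ cancels exactly, leaving $\vec s=\vec r$.

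The only spot requiring genuine care is the identification of $M$ with a scalar matrix; the rest is bookkeeping. One could alternatively invoke a symmetry/trace argument, but acting on the spanning family $\{\vec e_j\}$ is cleaner because it uses only the inner-product data already proved for the $\vec e_i$. Everything else is immediate once the scaling $\vec t_i = (d+1)\vec e_i$ is in place, which is precisely why that scaling was selected when $\mathcal{Q}$ was set up.
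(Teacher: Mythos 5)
Your proof is correct, but it takes a genuinely different route from the paper's. The paper never computes $\vec s=\sum_i p_i\vec t_i$ directly: instead it rewrites the identity claim $(g_{\mathcal{Q}_{\mathcal{P}}}\circ m_E)(\vec r)=\vec r$ as $m_E(\vec r)=g_{\mathcal{Q}_{\mathcal{P}}}^{-1}(\vec r)$ and compares the two probability distributions componentwise, using the already-derived inverse formula \eqref{e:finv}, $p_i=\frac{1}{d^2}(\vec t_i\cdot\vec r+1)$; with $\vec t_i=(d+1)\vec e_i$ the two expressions for $p_i$ coincide term by term and the proof is a two-line substitution. You instead push forward through $g_{\mathcal{Q}_{\mathcal{P}}}$, which forces you to establish the tight-frame identity $\sum_i\vec e_i\,\vec e_i^{\,T}=\frac{d^2}{(d+1)^2}I_{d^2-1}$; your derivation of it (act on each $\vec e_j$, use $\vec e_i\cdot\vec e_j=[d^2\delta_{ij}-1]/(d+1)^2$, kill the stray term with $\sum_i\vec e_i=\vec 0$, and extend by \propref{p:nbasis} since any $d^2-1$ of the $\vec e_i$ form a basis) is sound, and both of the simplex facts you invoke were indeed proved in \secref{s:simplexes} and shown in \secref{s:SICBloch} to apply to the $\vec e_i$. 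The paper's route is shorter because it reuses the inverse map already computed; yours is more self-contained in the forward direction and has the side benefit of making explicit that the SIC Bloch vectors form a tight frame (a resolution of the identity on $\R^{d^2-1}$), which is a reusable fact the paper's argument leaves implicit.
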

\begin{proof}
    The map $(g_{\mathcal{Q}_{\mathcal{P}}}\circ m_E)$ is a bijection because it is a composition of bijections, it goes from $\mathcal{B}_{d^2-1}\subset\R^{d^2-1}$ to $\mathcal{Q}\subset\R^{d^2-1}$. We therefore need to prove that this map is the identity map, that is
    \begin{equation*}
        (g_{\mathcal{Q}_{\mathcal{P}}}\circ m_E)(\vec r)=\vec r\qquad\forall\vec r\in\mathcal{B}_{d^2-1}\,,
    \end{equation*}
    or, equivalently
    \begin{equation*}
        m_E(\vec r)=g_{\mathcal{Q}_{\mathcal{P}}}^{-1}(\vec r)\,.
    \end{equation*}
    Applying \eqref{e:mE} and \eqref{e:finv} we obtain
    \begin{align*}
        \left\{\frac{1}{d^2}+\frac{d+1}{d^2}\vec e_i\cdot \vec r\right\}_{i=1}^{d^2} &= \left\{\frac{1}{d^2}(\vec t_i\cdot \vec r+1)\right\}_{i=1}^{d^2}\\
        \left\{\frac{1}{d^2}[(d+1)\vec e_i\cdot \vec r+1]\right\}_{i=1}^{d^2} &= \left\{\frac{1}{d^2}(\vec t_i\cdot \vec r+1)\right\}_{i=1}^{d^2}\,,
    \end{align*}
    and this last equality is true for all $\vec r\in\mathcal{B}_{d^2-1}$ because $\vec t_i=(d+1)\vec e_i$.
\end{proof}

Now we can study basic facts about the shape of $\mathcal{Q}$. We will denote by $\mathfrak{P}$ the subset of $\mathcal{Q}$ corresponding to pure states.

\begin{corollary}
    $\mathfrak{P}$ is a subset of the $(d^2-2)$-sphere of radius $R_{\mathfrak{P}}=\sqrt{\frac{d-1}{d+1}}$.
\end{corollary}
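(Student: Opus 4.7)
The plan is to invoke the main proposition just proved, $\mathcal{B}_{d^2-1}=\mathcal{Q}$, and combine it with the norm characterization of pure states from equation \eqref{e:normr}. Concretely, the main proposition establishes that $(g_{\mathcal{Q}_{\mathcal{P}}}\circ m_E)$ is the identity on $\mathcal{B}_{d^2-1}$, so the point $\vec s\in\mathcal{Q}$ corresponding to a density operator $\rho$ is literally the Bloch vector $\vec r$ of $\rho$ (under the chosen scaling $\vec t_i=(d+1)\vec e_i$). Therefore, if $\vec s\in\mathfrak{P}$, the associated $\vec r$ satisfies the pure-state condition \eqref{e:normr}, giving $\vec s^{\,2}=\vec r^{\,2}=(d-1)/(d+1)$, which is exactly $R_{\mathfrak{P}}^2$. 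This shows $\mathfrak{P}$ is contained in the $(d^2-2)$-sphere of radius $R_{\mathfrak{P}}$ centered at the origin of $\R^{d^2-1}$.

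As an internal consistency check I would also verify the radius directly on the probability side, without appealing to the Bloch representation. For a pure state $\Tr(\rho^2)=1$, and using the SIC-POVM relations \eqref{e:defSIC} one obtains the standard identity $\sum_{i=1}^{d^2}p_i^2=2/(d(d+1))$. Plugging this into \eqref{e:p2(s)} with $n+1=d^2$ gives
\begin{equation*}
\vec s^{\,2}=d^2\sum_{i=1}^{d^2}p_i^2-1=\frac{2d}{d+1}-1=\frac{d-1}{d+1},
\end{equation*}
which matches $R_{\mathfrak{P}}^2$ and so confirms the computation.

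There is essentially no obstacle here: the corollary is just the norm statement \eqref{e:normr} transported through the identification $\mathcal{B}_{d^2-1}=\mathcal{Q}$, and the only delicate point is remembering that the scaling of the simplex vertices $\vec t_i=(d+1)\vec e_i$ was chosen precisely so that this identification is by the identity map rather than by some rescaling. Once that is kept in mind, the argument is a one-line substitution.
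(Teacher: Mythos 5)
Your proof is correct and follows exactly the paper's route: invoke the identification $\mathcal{B}_{d^2-1}=\mathcal{Q}$ from the main proposition and then read off the radius from the pure-state norm condition \eqref{e:normr}. The extra consistency check via $\sum_i p_i^2$ is fine as a sanity check, though note that the identity $\sum_{i=1}^{d^2}p_i^2=2/(d(d+1))$ is itself derived later in the paper \emph{from} this corollary's sphere, so it should not be promoted to the primary argument without an independent derivation.
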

\begin{proof}
    Because \eqref{e:B=Q} we can use \eqref{e:normr} that gives the norm of Bloch vectors corresponding to pure states.
\end{proof}

\begin{corollary}
The sphere that contains $\mathfrak{P}$ is not completely inside $\Delta_{d^2}$, except in the case $d=2$, then it is the inner sphere of the simplex.
\end{corollary}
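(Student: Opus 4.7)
The plan is to reduce the question to a direct comparison between the radius $R_{\mathfrak{P}}=\sqrt{(d-1)/(d+1)}$ furnished by the previous corollary and the inradius $R_{in}$ of the simplex $\Delta_{d^2}$. By definition, the sphere is contained in $\Delta_{d^2}$ if and only if $R_{\mathfrak{P}}\le R_{in}$, since the inner sphere is the largest sphere centered at the origin that fits inside the simplex.

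First I would specialize the general formula \eqref{e:Rin} for the inradius to our setting. The simplex $\Delta_{d^2}$ has $d^2$ vertices, so $n+1=d^2$ and $n=d^2-1$, giving
\begin{equation*}
    R_{in}=\frac{1}{\sqrt{d^2-1}}=\frac{1}{\sqrt{(d-1)(d+1)}}.
\end{equation*}
Next I would compute the ratio of squares
\begin{equation*}
    \frac{R_{\mathfrak{P}}^2}{R_{in}^2}=\frac{d-1}{d+1}\cdot (d-1)(d+1)=(d-1)^2,
\end{equation*}
so that $R_{\mathfrak{P}}=(d-1)\,R_{in}$.

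From this one-line comparison both claims fall out. For $d=2$ we get $R_{\mathfrak{P}}=R_{in}$, so the sphere carrying $\mathfrak{P}$ coincides with the inner sphere of $\Delta_4$ and is therefore completely inside the simplex. For $d\ge 3$ we have $R_{\mathfrak{P}}>R_{in}$, so the sphere cannot be contained in $\Delta_{d^2}$: along the direction from the origin to the centroid of any $(d^2-2)$-facet the sphere reaches distance $R_{\mathfrak{P}}>R_{in}=d_{d^2-2}$, hence it crosses that facet and exits the simplex.

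There is no real obstacle here; the only thing to be careful about is recording the correct value of $n$ when instantiating \eqref{e:Rin}, and noting that the characterization of $R_{in}$ as the maximal radius of a centered sphere inside $\Delta_{d^2}$ is exactly what makes the two directions of the dichotomy immediate.
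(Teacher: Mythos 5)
Your proposal is correct and follows essentially the same route as the paper: both reduce the claim to comparing $R_{\mathfrak{P}}^2=\frac{d-1}{d+1}$ with $R_{in}^2=\frac{1}{d^2-1}$ and observing that the comparison amounts to $(d-1)^2\geq 1$, with equality exactly when $d=2$. Your phrasing via the ratio $R_{\mathfrak{P}}=(d-1)R_{in}$ and the explicit remark that the sphere crosses a $(d^2-2)$-facet when $d\geq 3$ is a slightly more explicit packaging of the same argument.
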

\begin{proof}
We have to prove that the next inequality is true, and that is an equality only if $d=2$.
    \begin{align*}
        R^2_{\mathfrak{P}} &\geq R^2_{in}\\
        \frac{d-1}{d+1} &\geq \frac{1}{d^2-1} &&\text{(by \eqref{e:Rin})}\\
        d-1 &\geq \frac{1}{d-1} &&\text{(multiplying by $d+1$)}\\
        (d-1)^2 &\geq 1 &&\text{(multiplying by $d-1$)}
    \end{align*}
\end{proof}

The following result was also obtained in \cite{AEF2009}.

\begin{proposition}
    The sphere that contains $\mathfrak{P}$ is tangent to the facets of $\Delta_{d^2}$ of dimension $m_{\mathfrak{P}}=\frac{(d+2)(d-1)}{2}$.
\end{proposition}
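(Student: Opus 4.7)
The plan is to reduce the tangency claim to the algebraic equality $R_{\mathfrak{P}}^2 = d_{m_{\mathfrak{P}}}^2$, where $d_m$ is the distance from the origin to an $m$-facet of $\Delta_{d^2}$ computed in equation \eqref{e:dm} with $n = d^2-1$. Since the sphere carrying $\mathfrak{P}$ is centered at the origin (the barycenter of $\Delta_{d^2}$), and since every $m$-facet of a regular simplex has its centroid as the foot of the perpendicular from the origin onto its affine hull, the sphere is tangent to an $m$-facet exactly when its radius equals $d_m$: for $R < d_m$ the sphere misses the affine hull, for $R > d_m$ it cuts it in a lower-dimensional sphere, and for $R = d_m$ it touches it precisely at the centroid, which is an interior point of the facet. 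Hence tangency is equivalent to the scalar equation $R_{\mathfrak{P}}^2 = d_m^2$.

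Next I would plug in $R_{\mathfrak{P}}^2 = \tfrac{d-1}{d+1}$ from the preceding corollary and $d_m^2 = \tfrac{d^2-1-m}{m+1}$ from \eqref{e:dm}, and solve the resulting equation
\begin{equation*}
\frac{d-1}{d+1} = \frac{d^2-1-m}{m+1}
\end{equation*}
for $m$. Cross-multiplying gives $(d-1)(m+1) = (d+1)(d^2-1-m)$; grouping the $m$ terms yields $2dm = (d-1)\bigl[(d+1)^2 - 1\bigr] = d(d-1)(d+2)$, so $m = \tfrac{(d-1)(d+2)}{2}$, which is exactly $m_{\mathfrak{P}}$.

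There is no serious obstacle: the only thing worth a brief remark is the justification that equality of radii already implies genuine tangency with the facet (not merely with its affine hull), and this is immediate because the orthogonal projection of the origin onto the affine hull of the chosen $m$-facet is its centroid $\vec F_m = \tfrac{1}{m+1}\sum_{i=1}^{m+1}\vec t_i$, a convex combination of vertices with strictly positive weights, hence lying in the relative interior of the facet. The computation also shows that $m_{\mathfrak{P}}$ is an integer for every $d \geq 2$, since $(d-1)(d+2) = d^2+d-2$ is always even.
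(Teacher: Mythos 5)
Your proposal is correct and follows essentially the same route as the paper: both reduce tangency to the equality $R_{\mathfrak{P}}^2=d_m^2$ using \eqref{e:dm} with $n=d^2-1$, the only difference being that you solve the equation for $m$ while the paper substitutes $m=m_{\mathfrak{P}}$ and verifies that $d_{m_{\mathfrak{P}}}=\sqrt{\tfrac{d-1}{d+1}}$. Your extra remark that the foot of the perpendicular is the centroid $\vec F_m$, which lies in the relative interior of the facet, is a small but genuine improvement in rigor over the paper, which takes the equivalence of tangency with the radius condition for granted.
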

\begin{proof}
    First observe that $m_{\mathfrak{P}}$ is a natural number because $d+2$ and $d-1$ have opposite parity, so one of them is divisible by 2. Equation \eqref{e:dm} gives de distance from the origin to the m-facets, we will demonstrate that $d_m$, for the particular value $m=m_{\mathfrak{P}}$,  is the radius of the sphere that contains $\mathfrak{P}$.
    \begin{align*}
        d_{m_{\mathfrak{P}}} &= \sqrt{\frac{d^2-1-m_{\mathfrak{P}}}{m_{\mathfrak{P}}+1}}\\
        &= \sqrt{\frac{d^2-1-\frac{(d+2)(d-1)}{2}}{\frac{(d+2)(d-1)}{2}+1}}\\
        &= \sqrt{\frac{d-1}{d+1}}
    \end{align*}
\end{proof}

Therefore $\mathcal{Q}$ \emph{is a subset} of a $(d^2-1)$-ball truncated by the $m$-facets of $\Delta_{d^2-1}$ with $m>m_{\mathfrak{P}}=\frac{(d+2)(d-1)}{2}$, because from \eqref{e:d_mseq} we have that if $m>m_{\mathfrak{P}}$ then $d_m<d_{m_{\mathfrak{P}}}$. But the shape of $\mathcal{Q}$ is not simply this truncated ball, as we have emphasized it is a proper subset of this body. Remember that the pure states are a $(2d-2)$-dimensional manifold, so not all points on the surface of the ball, even those situated inside the simplex, can be quantum states.

The following result can be found in \cite{QBRQS}, although it is obtained in a different way.

\begin{corollary}
    If $\{p_i\}_{i=1}^{d^2}$ is the distribution of probability over the outcomes of a SIC-POVM of a pure state then
    \begin{equation}\label{e:p2pure}
        \sum_{i=1}^{d^2}p_i^2=\frac{2}{d(d+1)}\,.
    \end{equation}
\end{corollary}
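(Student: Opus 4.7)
The plan is to combine the main result $\mathcal{B}_{d^2-1}=\mathcal{Q}$ with the formula relating $\sum p_i^2$ to $\vec s^2$ derived in \secref{s:simplexes}, together with the known norm of a pure state's Bloch vector.

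First I would invoke \eqref{e:B=Q}: if $\{p_i\}$ is the SIC-POVM distribution of a pure state $\rho$, then the associated simplex vector $\vec s = \sum_{i=1}^{d^2} p_i\vec t_i \in \mathcal{Q}$ coincides with the Bloch vector $\vec r$ of $\rho$. Hence $\vec s^2 = \vec r^2$, and by \eqref{e:normr} applied to a pure state,
\begin{equation*}
    \vec s^2 = \frac{d-1}{d+1}.
\end{equation*}
Next I would apply \eqref{e:p2(s)} with $n+1 = d^2$:
\begin{equation*}
    \sum_{i=1}^{d^2} p_i^2 = \frac{1}{d^2}\bigl(\vec s^2 + 1\bigr) = \frac{1}{d^2}\left(\frac{d-1}{d+1} + 1\right) = \frac{1}{d^2}\cdot\frac{2d}{d+1} = \frac{2}{d(d+1)}.
\end{equation*}

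Alternatively, one could route the argument through \corref{c:p2facet}: the previous proposition identified the pure-state radius $\sqrt{(d-1)/(d+1)}$ with $d_{m_{\mathfrak{P}}}$ for $m_{\mathfrak{P}} = (d+2)(d-1)/2$, so the corollary yields $\sum p_i^2 = 1/(m_{\mathfrak{P}}+1) = 2/\bigl((d+2)(d-1)+2\bigr) = 2/(d(d+1))$. Either route is a short calculation once $\mathcal{B}=\mathcal{Q}$ is in hand; there is no real obstacle, since the substantive content (the pure-state norm, the simplex identity, and the Bloch–SIC identification) has already been established. I would favor the direct route via \eqref{e:p2(s)} for concision.
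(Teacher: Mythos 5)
Your proposal is correct. Your preferred route differs slightly from the paper's: the paper proves the corollary by combining \corref{c:p2facet} with the immediately preceding proposition (that the pure-state sphere is tangent to the $m_{\mathfrak{P}}$-facets, $m_{\mathfrak{P}}=\frac{(d+2)(d-1)}{2}$), i.e.\ exactly the ``alternative'' you sketch at the end, computing $\sum p_i^2 = 1/(m_{\mathfrak{P}}+1)$. Your primary route instead applies \eqref{e:p2(s)} directly with $\vec s^2=\vec r^2=\frac{d-1}{d+1}$, obtained from \eqref{e:B=Q} and \eqref{e:normr}; this bypasses the tangency proposition entirely and is, if anything, more self-contained, since \corref{c:p2facet} is itself just \eqref{e:p2(s)} evaluated at $\vec s^2=d_m^2$. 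Both arguments rest on the same substantive facts ($\mathcal{B}_{d^2-1}=\mathcal{Q}$ and the pure-state Bloch norm), and your arithmetic in either branch checks out, so the difference is one of packaging rather than substance.
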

\begin{proof}
    We simply use \corref{c:p2facet} and the last proposition.
    \begin{align*}
        \sum_{i=1}^{d^2}p_i^2 &= \frac{1}{m_{\mathfrak{P}}+1}\\
        &= \frac{1}{\frac{(d+2)(d-1)}{2}+1}\\
        &= \frac{2}{d(d+1)}\,.
    \end{align*}
\end{proof}

\section{Conclusions and future research}
We have proved that with suitable scalings the set of Bloch vectors, $\mathcal{B}$, is equal to the set of points, $\mathcal{Q}$, of the simplex associated to the probability distributions over the outcomes of a SIC-POVM that correspond to quantum states. We have see that $\mathcal{Q}$ is a subset of a $d^2-1$-ball truncated by the $m$-facets of a $d^2-1$-simplex with $m>m_{\mathfrak{P}}=\frac{(d+2)(d-1)}{2}$ ($d$ is the dimension of the Hilbert space we are considering). As a consequence for pure states $\sum_{i=1}^{d^2}p_i^2=\frac{2}{d(d+1)}$, where $p_i$ is the probability of obtaining result $i$ when measuring the SIC-POVM $\{E_i\}_{i=1}^{d^2}$.

The final objective of this work is to understand in physical, not purely mathematical, terms why $\mathcal{Q}$ has that structure. Why are the pure states situated on a sphere? Why is this sphere tangent to some of the facets of our simplex $\Delta_{d^2}$? Why are these facets precisely those of dimension $\frac{(d+2)(d-1)}{2}$?.

I think that trying to answer this questions we will have a deeper understanding of Quantum Foundations, and therefore of our world.

\subsection*{Acknowledgment}
I want to acknowledge Dr. Chris Fuchs, that kindly give me his endorsement to submit papers to the arXiv.

\bibliographystyle{siam}
\bibliography{MyBib}

\end{document}